\documentclass[11pt]{article}
\usepackage{geometry,a4wide}

\usepackage{mathrsfs}
\usepackage{lmodern}
\usepackage{amsmath,amssymb,amsthm,mathtools}
\usepackage{graphicx}
\usepackage{xcolor}
\usepackage[backend=bibtex,style=numeric,maxnames=99]{biblatex}
\usepackage{hyperref,cleveref}
\DeclareMathOperator{\CSP}{CSP}
\newcommand{\C}{\mathbb C}
\newcommand{\R}{\mathbb R}
\newcommand{\N}{\mathbb N}
\newcommand{\Z}{\mathbb Z}

\newcommand{\X}{\mathbf X}
\newcommand{\Y}{\mathbf Y}
\newcommand{\Lin}{\mathbf{Lin}}
\newcommand{\AIP}{\mathsf{AIP}}

\DeclareMathOperator{\Pol}{Pol}

\DeclareMathOperator{\SDP}{SDP}
\DeclareMathOperator{\ar}{ar}

\newcommand{\bone}{\mathbf{1}}  
\newcommand{\bzero}{\mathbf{0}}
\newcommand{\ba}{\mathbf{a}}

\newcommand{\bu}{\mathbf{u}}
\newcommand{\bx}{\mathbf{x}}

\newcommand{\by}{\mathbf{y}}

\newcommand{\be}{\mathbf{e}}

\newcommand{\minion}[1]{{\mathcal #1}}

\theoremstyle{plain}
\newtheorem{theorem}{Theorem}
\newtheorem{lemma}[theorem]{Lemma}
\newtheorem{proposition}[theorem]{Proposition}
\newtheorem{corollary}[theorem]{Corollary}

\theoremstyle{definition}
\newtheorem{definition}[theorem]{Definition}

\theoremstyle{remark}

\title{Linear equations mod $n$ are pseudo-telepathic}
\author{Lorenzo Ciardo\\
TU Graz\\
\texttt{lorenzo.ciardo@tugraz.at}
}
\date{}

\begin{document}
\maketitle

\begin{abstract}
We prove that the quantum monad in dimension
$2n$ admits no natural transformation to the polymorphism clone of
linear equations modulo $n$. Consequently, for every $n\geq 2$, there exists an unsatisfiable system of linear
equations over $\Z_n$ whose constraint system
game admits a perfect finite-dimensional quantum strategy. As a corollary, we completely characterise pseudo-telepathic constraint languages in finite dimension.
The proof combines a result of Harding, Jager, and Smith on group-valued
measures on subspaces of Hilbert spaces with the polymorphism-minion characterisation of
quantum pseudo-telepathy.
\end{abstract}

\section{Introduction}
\label{sec_intro}

In a linear constraint system game, two non-communicating provers (Alice and Bob) seek to convince a classical verifier that a system of linear equations modulo $n$ has a solution. Alice receives an equation and returns values for its variables; Bob receives a variable and returns a value for it. They win if Alice's assignment satisfies her equation and, whenever Bob's variable occurs in Alice's
equation, their answers are consistent. A strategy is perfect if it makes the players win with probability one for every pair of questions. A perfect classical strategy exists exactly when the linear system is satisfiable.
In a quantum strategy, the players may share a bipartite state $\psi\in H^{\otimes2}$ and base their answers on measurements of their part of the system, where $H$ is a nonzero finite-dimensional
Hilbert space~\cite{CleveM14}. 
We prove the following.

\begin{theorem}
\label{thm_main_pseudo_telepathy}
    For every $n\geq 2$, there exist unsatisfiable linear constraint systems modulo $n$ whose game admits a perfect quantum strategy over $\C^{2n}$.
\end{theorem}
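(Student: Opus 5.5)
We would follow the route outlined in the abstract and reduce the theorem to a statement about minions. By the polymorphism-minion characterisation of quantum pseudo-telepathy, a finite template $\mathbf A$ (here $\Lin_n$, i.e.\ linear equations over $\Z_n$) admits an unsatisfiable instance whose game has a perfect quantum strategy over $\C^d$ exactly when there is no minion homomorphism (natural transformation) $\minion Q_d\to\Pol(\Lin_n)$. Here $\minion Q_d$ is the quantum monad in dimension $d$, whose $L$-ary elements are $L$-tuples of pairwise orthogonal projections on $\C^d$ summing to the identity, i.e.\ projection-valued measures. So the plan is to show that no such homomorphism exists for $d=2n$. The theorem then follows by compactness: $\minion Q_d$ is not ``locally'' mapped into $\Pol(\Lin_n)$, so some finite instance witnesses the failure.

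The first step is to describe $\Pol(\Lin_n)$ concretely. Its $L$-ary polymorphisms are the affine maps $x\mapsto\sum_i c_ix_i$ with $c_i\in\Z_n$ and $\sum c_i=1$. So a minion homomorphism $\xi$ assigns to each PVM $(P_1,\dots,P_L)$ a coefficient vector $(c_1,\dots,c_L)$ with total $1$. Naturality under minors means that merging projections adds their coefficients.

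From this we extract a function on projections. Define $\mu(P)=c_1$, where $\xi(P,I-P)=(c_1,c_2)$. Naturality along the maps $[L]\to[2]$ shows the following: for any PVM $(P_i)$, the coefficient of $P_i$ equals $\mu(P_i)$, and $\mu\bigl(\sum_{i\in S}P_i\bigr)=\sum_{i\in S}\mu(P_i)$. Hence $\mu$ is a finitely additive $\Z_n$-valued measure on the lattice of closed subspaces of $\C^{2n}$, normalised by $\mu(I)=1$ and with $\mu(0)=0$. Here we use the minor that sends everything to one coordinate together with an empty part.

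The key step is to invoke the result of Harding, Jager and Smith on group-valued measures on subspace lattices. For $\dim H\ge3$, every such measure into an abelian group factors through the dimension, so $\mu(P)=\operatorname{rank}(P)\cdot g$ for a fixed $g\in\Z_n$. Applying this to $P=I$ gives $1=2n\cdot g=0$ in $\Z_n$, a contradiction. This explains the choice $d=2n$: $2n\ge3$ is needed to apply their theorem, and $2n\equiv0\pmod n$ kills the normalisation.

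The main obstacle is checking the exact hypotheses of the Harding--Jager--Smith theorem. Their result may be stated for measures on orthomodular lattices of subspaces with values in arbitrary groups, and possibly only for dimension at least $3$ or for real Hilbert spaces. We also have to confirm that the additivity we derive from naturality, namely additivity over orthogonal decompositions of length exactly $L$ for every $L$, matches their notion of measure. If their theorem only gives ``$\mu$ is determined by rank'' in some weaker form, for instance up to a homomorphism from a universal group, we would first try to show directly that $\mu$ is constant on rank-one projections, and then compute $\mu(I)$ as a sum of $2n$ equal terms.
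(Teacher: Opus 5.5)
Your architecture is exactly the paper's: the characterisation of pseudo-telepathy via $\minion Q_H\not\to\Pol(\Lin_n)$, polymorphisms as affine combinations with coefficient sum $1$, and extracting $\mu$ from binary minors $(U,U^\perp)$ with $\mu(H)=1$. The problem is the key step, which rests on a theorem that is not available as you state it.

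The claim that for $\dim H\ge 3$ every measure into an abelian group is a multiple of rank is false. Take $\mu(U)=\|P_U e_1\|^2$. It is an $\R$-valued measure on the subspaces of $\C^{2n}$, additive over orthogonal direct sums because $P_{U\oplus V}=P_U+P_V$. Yet it is not a function of rank, since $\mu(\C e_1)=1$ while $\mu(\C e_2)=0$.

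For $\Z_n$-valued measures, rank-determination is not what the cited result provides. The lemma the paper uses (Harding--Jager--Smith, Lemma~2) says only that every $\Z_n$-measure on the subspaces of the \emph{real} space $\R^{2n}$ satisfies $\mu(\R^{2n})=0$. Your explanation of the choice $d=2n$ reflects this misunderstanding. If your version held, $\C^n$ would already give $\mu(\C^n)=n g=0$ for $n\ge 3$; the factor $2$ is there because the known lemma is specifically about $\R^{2n}$. Your fallback, proving directly that $\mu$ is constant on rank-one projections, is a Kochen--Specker/Gleason-type statement for torsion-valued measures. You give no argument for it, and it is the whole difficulty rather than a hypothesis check.

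The repair is to use the correct lemma. This needs a step your proposal lacks: moving the measure from $\C^{2n}$ to $\R^{2n}$. For a real subspace $U\le\R^{2n}$, set $\nu(U)=\mu(U\otimes_\R\C)$. Complexification sends orthogonal real subspaces to orthogonal complex subspaces and commutes with direct sums. Hence $\nu$ is a $\Z_n$-measure on the subspaces of $\R^{2n}$ with $\nu(\R^{2n})=\mu(\C^{2n})=1$, which contradicts the lemma. The paper phrases this as composing with the minion homomorphism $\minion Q_{\R^{2n}}\to\minion Q_{\C^{2n}}$. With this substitution, the rest of your argument goes through as in the paper.
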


For $n=2$,~\Cref{thm_main_pseudo_telepathy} is well known and follows from four-dimensional parity proofs of the Bell--Kochen--Specker theorem~\cite{bell1966problem,kochen1967problem}, such as the 18-vector construction of~\cite{cabello1996bell}, or from the Mermin--Peres magic square~\cite{mermin1990simple,peres1990incompatible} and its nonlocal-game formulation~\cite{cabello2001all,aravind2004quantum}.
It was noted  by Qassim and Wallman~\cite{qassim2020classical} that the magic-square construction works for every even $n$, with perfect quantum strategies given by tensor products of generalised Pauli operators.
Moreover, the existence of unsatisfiable linear systems modulo $n=3$ and $n=5$ admitting perfect finite-dimensional quantum strategies has recently been announced by van Dobben de Bruyn and Roberson~\cite{vDobben2026magic}.
In addition,~\Cref{thm_main_pseudo_telepathy} strengthens the analogous result proved in~\cite{slofstra2024operator} by Slofstra and Zhang, which concerns commuting-operator strategies as opposed to finite-dimensional tensor-product strategies.
Indeed, every tensor-product strategy is in particular a commuting-operator strategy,
while games admitting a perfect commuting-operator strategy but no perfect tensor-product  strategy are known by Slofstra's answer~\cite{slofstra2020tsirelson} to Tsirelson's question~\cite{tsirelson2006bell}.

\section{Structures}
\label{sec_structures}
Our proof follows the same approach adopted in
\cite[Proposition~15]{ciardo_quantum_minion} to  show that the quantum monad of Abramsky, Barbosa, de Silva, and Zapata~\cite{abramsky2017quantum}, interpreted as a minion, does not admit a natural transformation to the dictator clone in any dimension $d\geq 3$. The key difference is that, here, we replace the use of Gleason's theorem~\cite{gleason1975measures} therein with a group-valued version due to Harding, Jager, and
Smith~\cite{harding2005group}.

First, let us introduce some    terminology. We let $\N$ be the set $\{1,2,\dots\}$ and, for $r\in\N$, we write $[r]=\{1,\ldots,r\}$. We view tuples $\ba=(a_1,\dots,a_r)$ as column vectors. $\bone$ and $\bzero$ denote the all-one and all-zero vectors, respectively.
A \emph{signature} $\sigma$ is a (finite) set of relation symbols $R$, each with its \emph{arity} $\ar(R)\in\N$.
A \textit{$\sigma$-structure} $\Y$ consists of a (finite) set $Y$ (the \emph{domain} of $\Y$) and a relation $R^\Y\subseteq Y^{\ar(R)}$ for each symbol $R\in\sigma$. 
A
\emph{homomorphism} $h:\X\to\Y$ between    two $\sigma$-structures $\X,\Y$ is a map $h:X\to Y$ such that
$\bx\in R^{\X}$ implies $h(\bx)\in R^{\Y}$
for every relation symbol $R\in\sigma$, where $h(\bx)$ is applied entrywise. We write $\X\to\Y$ to denote that a homomorphism from $\X$ to $\Y$ exists.
The \textit{constraints} of $\X$ are the pairs $c=(R,\bx)$ with $R\in\sigma$ and $\bx\in R^{\X}$.
In the \textit{$\X,\Y$ homomorphism game} (generalising the linear constraint system game discussed in~\Cref{sec_intro}), Alice and Bob seek to convince a classical verifier that $\X\to\Y$.
The verifier sends Alice a constraint $c=(R,\bx)$ of $\X$ and Bob a variable $x\in X$. Alice returns a tuple $\by\in Y^{\ar(R)}$, and Bob   returns a value $y\in Y$. They win if $\mathbf y\in R^{\Y}$ and $y_i=y$ for every coordinate $i$ such that $x_i=x$.
A strategy for Alice and Bob is \emph{perfect} if it makes them win regardless of the verifier's questions. A perfect classical strategy exists exactly when $\X\to\Y$. A perfect quantum strategy allows the players to share a bipartite state $\psi\in H^{\otimes 2}$ for some nonzero finite-dimensional Hilbert space $H$, and base their answers on measurements of their part of $\psi$. 
We shall assume that $\psi$ is the maximally entangled state in $H^{\otimes 2}$, which can be done without loss of generality by~\cite{abramsky2017quantum}. 
We call a $\sigma$-structure $\Y$ \emph{pseudo-telepathic over $H$} if there is some $\sigma$-structure $\X$ for which the $\X,\Y$ homomorphism game has a perfect finite-dimensional quantum strategy over $H$ but no perfect classical strategy~\cite{galliard2002pseudo,brassard2005quantum}.

To prove~\Cref{thm_main_pseudo_telepathy}, it will suffice to consider linear equations with a particularly simple shape, encoded by a finite reduct of the structure  whose relations are the affine hyperplanes of $\Z_n^r$ for any $r\in\N$.
For an integer $n\geq2$, we let $\Lin_n$ be the
structure with domain $\Z_n$ having a unary relation $C_0^{\Lin_n}=\{0\}$ and, for each $b\in\Z_n$, a ternary relation $R_b^{\Lin_n}=\{(y_1,y_2,y_3)\in\Z_n^3:y_1+y_2+y_3=b\}$.
An instance of the \textit{constraint satisfaction problem (CSP)} parameterised by $\Lin_n$ is a system of linear equations modulo $n$, each having three variables and all coefficients equal to one;
variables may occur more than once in an equation and may be constrained to equal zero.\footnote{In fact, we could consider even simpler types of equations, by taking only the ternary symbols $R_0$ and $R_1$. We add the other symbols to make the argument slightly cleaner.} We shall prove the following result, which immediately gives~\Cref{thm_main_pseudo_telepathy}.

\begin{theorem}\label{thm_ain}
For every $n\geq2$,  the structure $\Lin_n$ is pseudo-telepathic over $\C^{2n}$.
\end{theorem}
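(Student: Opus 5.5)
The plan is to use the polymorphism-minion characterisation of pseudo-telepathy, following the proof of \cite[Proposition~15]{ciardo_quantum_minion}. That characterisation says the following. Let $\minion{Q}_d$ be the quantum monad of \cite{abramsky2017quantum} in dimension $d$, viewed as a minion. Then a structure $\Y$ is pseudo-telepathic over $\C^d$ if and only if there is no minion homomorphism (natural transformation) $\minion{Q}_d\to\Pol(\Y)$. Concretely, if such a homomorphism is missing, then compactness yields a finite $\X$ with a perfect quantum strategy over $\C^d$ but $\X\not\to\Y$. Conversely, if it exists, it converts quantum strategies into classical ones.

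So it suffices to show that there is no minion homomorphism $\xi:\minion{Q}_{2n}\to\Pol(\Lin_n)$. Suppose for contradiction that $\xi$ exists. Recall that an element of $\minion{Q}_d$ of arity $r$ is a projective measurement $(P_1,\dots,P_r)$ on $\C^d$, i.e.\ orthogonal projectors summing to the identity, and the minor maps coarse-grain by summing projectors.

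\textbf{Step 1: describe the polymorphisms.} A polymorphism of $\Lin_n$ preserves $C_0=\{0\}$ and all $R_b$. Because it preserves $R_b$ for every $b$, it is additive in the appropriate sense. A standard argument then shows that each $r$-ary polymorphism is an affine map $f(\ba)=\sum_i \lambda_i a_i$ with $\lambda_i\in\Z_n$. Preservation of $R_b$ for all $b$ forces $\sum_i\lambda_i=1$, and preservation of $C_0$ removes the constant term. So $\Pol(\Lin_n)$ is the minion of $\Z_n$-vectors summing to $1$, with minors given by summing coordinates along fibres.

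\textbf{Step 2: extract a measure.} Define $\mu(P)\in\Z_n$ for every projector $P$ on $\C^{2n}$ as follows. Take the binary measurement $(P,I-P)$ and let $\mu(P)$ be the first coefficient of $\xi(P,I-P)$. Naturality of $\xi$ with respect to minors gives two properties.
\begin{itemize}
\item For every measurement $(P_1,\dots,P_r)$, the $i$-th coefficient of $\xi(P_1,\dots,P_r)$ equals $\mu(P_i)$. To see this, merge all outcomes other than $i$.
\item The map $\mu$ is finitely additive on orthogonal projectors, and $\mu(I)=1$.
\end{itemize}
Thus $\mu$ is a $\Z_n$-valued finitely additive measure on the subspace lattice of $\C^{2n}$, normalised by $\mu(I)=1$.

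\textbf{Step 3: apply Harding--Jager--Smith.} The result of \cite{harding2005group} states that in dimension at least $3$, every group-valued finitely additive orthomeasure factors through dimension. That is, there is a group element $g$ with $\mu(P)=\operatorname{rank}(P)\,g$. Since $2n\geq 3$, we get $\mu(I)=2n\,g=2\cdot(ng)=0$ in $\Z_n$, because $n\,g=0$ in $\Z_n$. This contradicts $\mu(I)=1$.

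The dimension $2n$ is chosen precisely so that it is divisible by $n$ and is at least $3$; any dimension $d\ge3$ with $n\mid d$ would work the same way.

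\textbf{Main obstacle.} The delicate step is checking that the hypotheses of Harding--Jager--Smith match exactly: the right lattice (projectors on a complex space), additivity over orthogonal families, and whether the theorem applies to arbitrary abelian groups and to complex spaces. If needed, I would restrict to a real subspace. A secondary concern is the form of the pseudo-telepathy characterisation: it requires $\xi$ to exist for the full (possibly infinite) minion rather than just a finite reduct. I would handle that through the compactness argument from \cite{ciardo_quantum_minion}, with $\Y=\Lin_n$ finite.
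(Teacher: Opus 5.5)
Your architecture matches the paper's: the characterisation $\minion Q_H\not\to\Pol(\Y)$, the identification $\Pol(\Lin_n)\cong\AIP_n$, extraction of a $\Z_n$-measure with $\mu(H)=1$, and a contradiction from Harding--Jager--Smith. Steps~1 and~2 are fine.

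\textbf{The gap is Step~3.} The result you attribute to \cite{harding2005group} is false as stated. You claim that in dimension at least $3$ every group-valued orthomeasure satisfies $\mu(P)=\operatorname{rank}(P)\,g$. A counterexample: take the group $(\R,+)$ and any Hermitian $A$. The map $\mu(P)=\operatorname{tr}(AP)$ is additive on orthogonal projectors, since $P_{U\oplus V}=P_U+P_V$. For $A=\operatorname{diag}(1,0,\dots,0)$ it takes the values $1$ and $0$ on the lines spanned by $\be_1$ and $\be_2$, so it does not factor through rank. The states in Gleason's theorem are of exactly this form. Your contradiction therefore rests on a lemma that does not hold. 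Your justification for the dimension ($2n\geq3$ together with $n\mid 2n$) is also not the operative one. Even restricted to $\Z_n$-valued measures, a rank-factorisation theorem is not what the paper relies on.

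\textbf{What the argument needs.} The paper cites \cite[Lemma~2]{harding2005group}: every $\Z_n$-measure on the subspaces of $\R^{2n}$ satisfies $\mu(\R^{2n})=0$. This concerns total mass in the particular real dimension $2n$. So your closing claim that any $d\geq3$ with $n\mid d$ would work does not follow from it; for instance, $d=n$ is not covered.

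\textbf{The real/complex step.} Because the lemma is about real space, you must make the passage from $\C^{2n}$ precise, and ``restricting to a real subspace'' is not the right operation. Instead, pull $\mu$ back along complexification. Set $\nu(V)=\mu(V\otimes_\R\C)$ for $V\in L_{\R^{2n}}$. This is a $\Z_n$-measure, because complexification preserves orthogonality and orthogonal sums, and $\nu(\R^{2n})=\mu(\C^{2n})=1$. Equivalently, as the paper does, compose $\xi$ with the minion homomorphism $\minion Q_{\R^{2n}}\to\minion Q_{\C^{2n}}$ before extracting the measure.

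With these two corrections your argument becomes exactly the paper's proof.
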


\section{Polymorphisms and minions}
\label{sec_minions_pseudo_telepathy}
We now introduce the notions of minions and polymorphisms from the theory of (promise) CSP complexity; see e.g.~\cite{BBKO21}.

\begin{definition}
A \emph{minion} $\minion M$ is a functor from the skeleton category of non-empty finite sets to
the category of non-empty sets. In other words, $\minion M$ is the disjoint union of non-empty sets $\minion M^{(\ell)}$ for $\ell\in\N$, equipped with operations $(\cdot)_{/\pi}:\minion M^{(\ell)}\to \minion M^{(\ell')}$ (known as \emph{minor operations}) for each pair of integers $\ell,\ell'\in\N$ and each map
$\pi:[\ell]\to [\ell']$, which must satisfy the following two conditions:
\begin{itemize}
    \item $M_{/\operatorname{id}}=M$ 
    for each $\ell\in\N$ and $M\in\minion M^{(\ell)}$, where $\operatorname{id}$ is the identity map on the set $[\ell]$;
    \item $(M_{/\pi})_{/\tilde\pi}=M_{/\tilde\pi\circ\pi}$
     for each $\ell,\ell',\ell''\in\N$, $M\in\minion M^{(\ell)}$, $\pi:[\ell]\to[\ell']$,
    and $\tilde\pi:[\ell']\to[\ell'']$.
\end{itemize}
\end{definition}

\begin{definition}
A \emph{minion homomorphism} $\xi:\minion M\to\minion N$ is a natural transformation from $\minion M$ to $\minion N$. In other words, $\xi$ is a map from the underlying set of $\minion M$ to the underlying set of $\minion N$ that
\begin{itemize}
    \item preserves the arities---i.e., $\xi(M)\in\minion N^{(\ell)}$ if $M\in\minion M^{(\ell)}$;
    \item preserves the minors---i.e., $\xi(M_{/\pi})=\xi(M)_{/\pi}$ if $M\in\minion M^{(\ell)}$ and  $\pi:[\ell]\to[\ell']$.
\end{itemize}
We write $\minion M\to\minion N$ when such a $\xi$ exists; otherwise, we write $\minion M\not\to\minion N$. We also let a \textit{minion isomorphism} be a bijective minion homomorphism.
\end{definition}

Let $\Y$ be a $\sigma$-structure. 
An $r$-ary \emph{polymorphism} of $\Y$ is a function $f:Y^r\to Y$
that preserves every relation of $\Y$ in the following sense: for any $k$-ary $R\in\sigma$
and any $\mathbf y^{(1)},\ldots,\mathbf y^{(r)}\in R^{\Y}$, the tuple
\begin{align*}
 (f(y_1^{(1)},\dots,y_1^{(r)}),\dots,
       f(y_k^{(1)},\dots, y_k^{(r)}))
\end{align*}
belongs to $R^{\Y}$. The polymorphisms of $\Y$ form a minion $\Pol(\Y)$ with the minor operations given by
\begin{align*}
 f_{/\pi}(z_1,\dots,z_s)
   =f(z_{\pi(1)},\dots,z_{\pi(r)}).
\end{align*}

Fix a finite-dimensional Hilbert space $H$, and let $L_H$ be the set of linear subspaces of $H$.
We next define the \textit{quantum monad} $\minion Q_H$ introduced in~\cite{abramsky2017quantum}; following~\cite{ciardo_quantum_minion}, we shall conveniently view it as a minion.

\begin{definition}
For a finite-dimensional Hilbert space $H$, the \emph{quantum monad} $\minion Q_H$ is the minion whose $r$-ary set is
\begin{align*}
 \minion Q_H^{(r)}=
 \left\{(U_1,\dots,U_r)\in L_H^r:
    U_i\perp U_j\ (i\ne j),\quad \bigoplus_{i\in[r]}U_i=H\right\}.
\end{align*}
For $\pi:[r]\to[s]$, the corresponding minor operation is defined as follows:
\begin{align*}
 (U_1,\ldots,U_r)_{/\pi}
   =\left(\bigoplus_{i\in\pi^{-1}(1)}U_i,\dots,
           \bigoplus_{i\in\pi^{-1}(s)}U_i\right),
\end{align*}
where an empty sum is the space $\{0\}$.
\end{definition}

We use the following characterisation of pseudo-telepathy.

\begin{theorem}[{\cite[Theorem~6]{ciardo_quantum_minion}}]
\label{thm_characterisation_PT}
A structure $\Y$ is pseudo-telepathic over a finite-dimensional Hilbert space
$H$ if, and only if,
$\minion Q_H\not\to\Pol(\Y)$.
\end{theorem}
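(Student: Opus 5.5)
The plan is to reduce both directions to the standard free-structure machinery of the algebraic approach to (promise) CSPs~\cite{BBKO21}, together with the description of perfect quantum strategies due to Abramsky et al.~\cite{abramsky2017quantum}.

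\textbf{Step 1: the free structure.} For the minion $\minion Q_H$ and the structure $\Y$, with $Y=[m]$ say, let $\mathbf F=\mathbf F_{\minion Q_H}(\Y)$ be the free structure. Its domain is $\minion Q_H^{(m)}$, the set of orthogonal decompositions $(U_y)_{y\in Y}$ of $H$. For a $k$-ary symbol $R$, let $\ell=|R^{\Y}|$ and, for $i\in[k]$, let $\pi_i:[\ell]\to[m]$ send the $j$-th tuple of $R^\Y$ to its $i$-th entry. A tuple $(Q_1,\dots,Q_k)$ lies in $R^{\mathbf F}$ iff there is $M\in\minion Q_H^{(\ell)}$ with $Q_i=M_{/\pi_i}$ for all $i$.

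I then use the generic fact from~\cite{BBKO21}: for every minion $\minion M$, we have $\minion M\to\Pol(\Y)$ iff $\mathbf F_{\minion M}(\Y)\to\Y$. The direction from left to right evaluates a minion homomorphism $\xi$ at the unary projections. The direction from right to left sends $M\in\minion M^{(r)}$ to the $r$-ary function $\mathbf a\mapsto h(M_{/\mathbf a})$, where $h:\mathbf F\to\Y$ is the given homomorphism and $\mathbf a\in Y^r$ is read as a map $[r]\to[m]$. This fact is purely combinatorial and holds verbatim for $\minion Q_H$.

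\textbf{Step 2: perfect strategies are homomorphisms into $\mathbf F$.} With the maximally entangled state, I will check that a perfect quantum strategy over $H$ for the $\X,\Y$ game is the same thing as a homomorphism $\X\to\mathbf F$; this is essentially the quantum-monad/Kleisli description of~\cite{abramsky2017quantum}.

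Given a strategy, Bob's projective measurement for a variable $x$ gives the decomposition $(U_{x,y})_y$. For a constraint $(R,\bx)$, Alice's measurement has outcomes in $Y^k$; perfection forces it to be supported on $R^\Y$, so it is an element $M\in\minion Q_H^{(\ell)}$. Consistency with Bob under the maximally entangled state forces $M_{/\pi_i}=(U_{x_i,y})_y$, since the winning probability for the pair $(c,x_i)$ is a normalised trace of products of projectors, and it equals one exactly when the marginal projectors coincide. Conversely, such data define a perfect strategy.

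\textbf{Step 3: conclude.} If $\minion Q_H\to\Pol(\Y)$, then Step 1 gives $\mathbf F\to\Y$. Any perfect quantum strategy gives $\X\to\mathbf F$ by Step 2, and composing yields $\X\to\Y$, so $\Y$ is not pseudo-telepathic.

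If $\minion Q_H\not\to\Pol(\Y)$, then $\mathbf F\not\to\Y$. The domain of $\mathbf F$ is infinite, so I invoke compactness: since $\Y$ is finite, a structure maps to $\Y$ iff each of its finite substructures does (Tychonoff on $Y^F$, or König's lemma). Hence some finite substructure $\X\subseteq\mathbf F$ satisfies $\X\not\to\Y$. The inclusion $\X\to\mathbf F$ is then a perfect quantum strategy over $H$ by Step 2, while no perfect classical strategy exists. So $\Y$ is pseudo-telepathic over $H$.

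\textbf{Main obstacle.} The delicate point is Step 2, namely translating "perfect with probability one" into exact equalities of subspaces. For projectors $P$ (Alice) and $P'$ (Bob), the maximally entangled state gives joint probabilities $\frac{1}{\dim H}\operatorname{tr}(P\,\overline{P'})$. Perfect agreement forces $P\le P'$ for matching outcomes, and since both families sum to the identity, this yields equality. The complex conjugation on Bob's side must also be absorbed, for instance by conjugating Bob's subspaces, which does not affect the minion structure.
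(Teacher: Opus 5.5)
The paper gives no proof of this statement; it imports it from \cite[Theorem~6]{ciardo_quantum_minion}, so there is no in-paper argument to match. Your proposal is a correct, self-contained reconstruction along the natural free-structure route. It has three parts. First, Step~2 identifies perfect strategies with homomorphisms into $\mathbf F_{\minion Q_H}(\Y)$. This makes the Kleisli description of \cite{abramsky2017quantum} explicit via the trace formula, the fact that $\operatorname{tr}(PQ)=0$ forces $PQ=0$ for projectors, and conjugation of Bob's subspaces. Second, you use the generic equivalence between $\minion M\to\Pol(\Y)$ and $\mathbf F_{\minion M}(\Y)\to\Y$ from \cite{BBKO21}. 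Third, compactness extracts a finite witness $\X$. Compared with the bare citation, your version shows exactly where the paper's convention of a maximally entangled state in $H^{\otimes 2}$ is used. It also makes clear that the witness $\X$ in the ``if'' direction is obtained non-constructively, as a finite substructure of a free structure that is typically uncountable; the citation simply keeps the note short. Three small repairs are needed. The homomorphism $\mathbf F\to\Y$ induced by $\xi$ is $M\mapsto\xi(M)(1,\dots,m)$, not an evaluation at unary projections. K\"onig's lemma only covers countable domains, whereas $\minion Q_H^{(m)}$ is uncountable once $\dim H\geq 2$ and $m\geq 2$, so rely on Tychonoff for $Y^{F}$ or on the compactness theorem. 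Finally, if $R^{\Y}=\emptyset$ you should set $R^{\mathbf F}=\emptyset$ explicitly, since minions have no $0$-ary part.
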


\section{Polymorphisms of linear equations}
We now describe the polymorphism minion of $\Lin_n$. All arithmetic in
this section is in $\Z_n$.
\begin{definition}
For $n\geq2$, define the minion $\AIP_n$ by
 $\AIP_n^{(r)}
   =\left\{\ba\in\Z_n^r:
                  \ba^\top\bone=1\right\}$,
with minor operations
$\ba_{/\pi}
   =\left(\sum_{i\in\pi^{-1}(1)}a_i,\dots,
           \sum_{i\in\pi^{-1}(s)}a_i\right)$.
\end{definition}
The following fact was essentially proved in~\cite{BBKO21} to capture the power of the \textit{affine integer programming} algorithm of~\cite{BG18} via polymorphism identities (whence the name of the minion).
We give a short self-contained proof for completeness.  

\begin{proposition}[\protect{cf.~\cite[\S7.3]{BBKO21}}]
\label{prop_pol}
For every $n\geq2$, the minions $\Pol(\Lin_n)$ and
$\AIP_n$ are isomorphic. 
\end{proposition}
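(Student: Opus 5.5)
The plan is to show that every $r$-ary polymorphism of $\Lin_n$ is an affine combination $f(\by)=\ba^\top\by$ with $\ba\in\Z_n^r$, $\ba^\top\bone=1$. We then take $\xi:\AIP_n\to\Pol(\Lin_n)$ sending $\ba$ to the map $\by\mapsto\ba^\top\by$ and verify that it is a bijective minion homomorphism.

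\textbf{Step 1: the map is well defined.} For $\ba\in\AIP_n^{(r)}$, the map $f_\ba(\by)=\ba^\top\by$ preserves $C_0$, since $f_\ba(\bzero)=0$. It also preserves each $R_b$: if $y^{(j)}_1+y^{(j)}_2+y^{(j)}_3=b$ for all $j$, then
\[
f_\ba(\by_1)+f_\ba(\by_2)+f_\ba(\by_3)=\sum_j a_j b=b,
\]
where $\by_i=(y^{(1)}_i,\dots,y^{(r)}_i)$. Minors are preserved because
\[
(f_\ba)_{/\pi}(\mathbf z)=\sum_i a_i z_{\pi(i)}=\sum_k\Bigl(\sum_{i\in\pi^{-1}(k)}a_i\Bigr)z_k=f_{\ba_{/\pi}}(\mathbf z).
\]
Injectivity follows by evaluating at the unit vectors $\be_i$, which gives $f_\ba(\be_i)=a_i$.

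\textbf{Step 2: surjectivity, which is the main step.} Let $f$ be an $r$-ary polymorphism. Preservation of $C_0$ gives $f(\bzero)=0$. We apply preservation of $R_b$ to the columns $(\by,\bu,\mathbf w)$, where $\mathbf w=\mathbf c-\by-\bu$ for an arbitrary constant-free choice of target vector $\mathbf c\in\Z_n^r$. Each row $j$ sums to $c_j$, but the relation $R_b$ requires a single $b$ for all rows. So instead we use three tuples whose coordinatewise sums are constant: for any $\by,\bu$, the rows $(y_j,u_j,-y_j-u_j)$ all lie in $R_0$. This gives
\[
f(\by)+f(\bu)+f(-\by-\bu)=0.
\]
Taking $\bu=\bzero$ yields $f(-\by)=-f(\by)$, and substituting back gives $f(\by+\bu)=f(\by)+f(\bu)$. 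Hence $f$ is additive on $\Z_n^r$, and therefore $\Z$-linear, so $f(\by)=\sum_j y_j f(\be_j)$. Set $a_j=f(\be_j)$.

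\textbf{Step 3: the affine condition.} Applying preservation of $R_1$ to rows $(1,0,0)$ in every coordinate gives $f(\bone)+f(\bzero)+f(\bzero)=1$, so $f(\bone)=1$, that is, $\ba^\top\bone=1$. Thus $f=f_\ba$ with $\ba\in\AIP_n^{(r)}$.

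The only delicate point is choosing triples of tuples whose rows all sum to the same constant $b$ in order to extract additivity. Steps 1 and 3 are routine.
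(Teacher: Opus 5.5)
Your proposal is correct and follows essentially the same route as the paper: you use the map $\ba\mapsto(\bx\mapsto\ba^\top\bx)$, extract additivity of an arbitrary polymorphism from $R_0$ applied to $(\by,\bzero,-\by)$ and $(\by,\bu,-\by-\bu)$, and get the normalisation $f(\bone)=1$ from $R_1$ applied to $(\bone,\bzero,\bzero)$. The differences are cosmetic: you argue injectivity directly via the unit vectors instead of exhibiting the inverse $f\mapsto\ba_f$, and the aborted first attempt at the start of Step 2 (with a non-constant target vector) can simply be deleted.
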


\begin{proof}
Consider the map $\xi: \AIP_n\to\Pol(\Lin_n)$ defined as follows: for each $r\in\N$, each $\ba\in\AIP_n^{(r)}$, and each $\bx\in\Z_n^r$, $\xi(\ba)(\bx)=\ba^\top\bx=\sum_{i\in[r]}a_ix_i$. We claim that $\xi$ is well defined and it is a minion homomorphism.

First, we show that $\xi(\ba)$ is a polymorphism of $\Lin_n$. Note that $\xi(\ba)(\bzero)=0$, so $\xi(\ba)$ preserves $C_0$. Moreover, let $M$ be an $r\times 3$ matrix all of whose rows sum up to some $b\in\Z_n$. Then the application of $\xi(\ba)$ to the columns of $M$ yields $(\ba^\top M\be_1,\ba^\top M\be_2,\ba^\top M\be_3)$, where $\be_i$ is the $i$-th standard unit vector. The sum of these values is $\ba^\top M\bone=b\ba^\top\bone=b$, so $\xi(\ba)$ preserves $R_b$ and is thus a polymorphism.

Take a map $\pi:[r]\to[s]$, and let $P_\pi$ be the $s\times r$ matrix whose $i$-th column is $\be_{\pi(i)}$. Note that
\begin{align*}
    \xi(\ba_{/\pi})(\bx)
    =
    (P_\pi\ba)^\top\bx=\ba^\top P_{\pi}^\top\bx=
    \xi(\ba)_{/\pi}(\bx),
\end{align*}
so $\xi$ preserves minors and, thus, is a minion homomorphism.

Let now $f$ be an $r$-ary polymorphism of $\Lin_n$. Preservation of $C_0$ gives $f(\bzero)=0$, and preservation of $R_0$ applied to $(\bx,\bzero,-\bx)$ gives $f(-\bx)=-f(\bx)$. Hence, preservation of $R_0$ applied to $(\bx,\by,-\bx-\by)$ gives $f(\bx+\by)=f(\bx)+f(\by)$, so $f$ is a group homomorphism $\Z_n^r\to\Z_n$. Since the standard unit vectors generate $\Z_n^r$, it follows that 
    $f(\bx)=\ba_f^\top\bx$ for each $\bx$, where $\ba_f=(f(\be_1),\dots,f(\be_r))$. Clearly, $\ba_f$ is uniquely determined. Furthermore, preservation of $R_1$ on $(\bone,\bzero,\bzero)$ gives $f(\bone)=1$, so $\ba_f^\top\bone=1$. Hence, $\ba_f\in\AIP_n^{(r)}$. The map $\vartheta:\Pol(\Lin_n)\to\AIP_n$ given by $f\mapsto \ba_f$  is the inverse of $\xi$, so $\xi$ is a minion isomorphism.
\end{proof}

\section{A group-valued Gleason theorem and pseudo-telepathy}

We next connect the quantum monad with the minions $\AIP_n$ seen in the previous section.

\begin{definition}
Let $H$ be a finite-dimensional Hilbert space and let $A$ be an Abelian group.
An \emph{$A$-measure on the subspaces of $H$} is a
map $\mu:L_H\to A$ such that
\begin{align*}
 \mu(U\oplus V)=\mu(U)+\mu(V)\qquad\text{whenever }U\perp V.
\end{align*}
\end{definition}

The next result uses the same idea as the one in the proof of~\cite[Proposition~15]{ciardo_quantum_minion} (and could be phrased in the more general setting of \textit{linear minions}~\cite{cz23soda:minions}).
\begin{proposition}
\label{prop_measure}
For every $n\geq2$ and every finite-dimensional Hilbert space $H$, a minion homomorphism $\minion Q_H\to\AIP_n$ exists
if, and only if, there is a $\Z_n$-measure $\mu$ on the subspaces of $H$ satisfying
$\mu(H)=1$.
\end{proposition}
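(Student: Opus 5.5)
We prove the two directions separately; both are direct translations between the minion structure of $\minion Q_H$ and additivity of $\mu$.

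\emph{From a measure to a homomorphism.} Let $\mu$ be a $\Z_n$-measure on the subspaces of $H$ with $\mu(H)=1$. Define $\xi:\minion Q_H\to\AIP_n$ by
\begin{align*}
\xi(U_1,\dots,U_r)=(\mu(U_1),\dots,\mu(U_r)).
\end{align*}
To see that this lands in $\AIP_n^{(r)}$, note that additivity gives $\mu(\{0\})=\mu(\{0\}\oplus\{0\})=2\mu(\{0\})$, so $\mu(\{0\})=0$. Iterating additivity over the pairwise orthogonal $U_i$ then yields $\sum_i\mu(U_i)=\mu(\bigoplus_iU_i)=\mu(H)=1$.

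Arity is preserved by construction. For minors, the $j$-th entry of $\xi((U_1,\dots,U_r)_{/\pi})$ is $\mu\bigl(\bigoplus_{i\in\pi^{-1}(j)}U_i\bigr)=\sum_{i\in\pi^{-1}(j)}\mu(U_i)$. This is exactly the $j$-th entry of $\xi(U_1,\dots,U_r)_{/\pi}$. When $\pi^{-1}(j)$ is empty, both sides are $0$ because $\mu(\{0\})=0$.

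\emph{From a homomorphism to a measure.} Let $\xi:\minion Q_H\to\AIP_n$ be a minion homomorphism. For $U\in L_H$, we have $(U,U^\perp)\in\minion Q_H^{(2)}$, and we set $\mu(U)$ to be the first coordinate of $\xi(U,U^\perp)$.

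The key tool is a lemma: for every $(U_1,\dots,U_r)\in\minion Q_H^{(r)}$ and every $i\in[r]$, the $i$-th coordinate of $\xi(U_1,\dots,U_r)$ equals $\mu(U_i)$. To prove it, let $\pi:[r]\to[2]$ send $i$ to $1$ and every other index to $2$. Then $(U_1,\dots,U_r)_{/\pi}=(U_i,U_i^\perp)$, since $\bigoplus_{j\neq i}U_j=U_i^\perp$ when the decomposition is orthogonal and complete. By naturality, $\xi(U_i,U_i^\perp)=\xi(U_1,\dots,U_r)_{/\pi}$. The first coordinate of the left side is $\mu(U_i)$, and the first coordinate of the right side is $a_i$, so $a_i=\mu(U_i)$.

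With the lemma in hand, both required properties follow quickly.

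For additivity, take $U\perp V$ and set $W=(U\oplus V)^\perp$, so that $(U,V,W)\in\minion Q_H^{(3)}$. By the lemma, $\xi(U,V,W)=(\mu(U),\mu(V),\mu(W))$. Now apply the minor $\pi:[3]\to[2]$ with $1,2\mapsto1$ and $3\mapsto2$. On one side this gives $(U\oplus V,W)=(U\oplus V,(U\oplus V)^\perp)$, whose $\xi$-image has first coordinate $\mu(U\oplus V)$. On the other side, the first coordinate of the minor of $\xi(U,V,W)$ is $\mu(U)+\mu(V)$. Hence $\mu(U\oplus V)=\mu(U)+\mu(V)$.

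For normalisation, $\xi(H)\in\AIP_n^{(1)}$ must be the tuple $(1)$. By the lemma with $r=1$, this single coordinate is $\mu(H)$, so $\mu(H)=1$.

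\emph{Expected obstacle.} There is no real analytic difficulty here. The only point needing care is well-definedness: the value assigned to a subspace must not depend on the ambient decomposition it sits in. The lemma handles exactly this, using the fact that the complement of a block in an orthogonal decomposition of $H$ is the orthogonal complement. The remaining bookkeeping concerns empty preimages and the zero subspace, which the identity $\mu(\{0\})=0$ takes care of.
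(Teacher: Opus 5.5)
Your proof is correct and follows essentially the same route as the paper: you define $\mu(U)$ as the first coordinate of $\xi(U,U^\perp)$, obtain additivity by applying minors to $\xi(U,V,(U\oplus V)^\perp)$, get normalisation from $\xi(H)=(1)$, and for the converse use the coordinatewise map $(U_1,\dots,U_r)\mapsto(\mu(U_1),\dots,\mu(U_r))$. Your separate coordinate lemma and your explicit check that $\mu(\{0\})=0$ (which the paper uses only implicitly, for empty preimages) spell out steps the paper performs inline.
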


\begin{proof}
Let $\xi:\minion Q_H\to\AIP_n$ be a minion homomorphism, and define $\mu(U)=\xi((U,U^\perp))_1$ for each $U\in L_H$. For any $U\perp V\in L_H$, write $\xi((U,V,(U\oplus V)^\perp))=(a,b,1-a-b)$. Since $\xi$ is a minion homomorphism, we have that $\mu(U)=a$, $\mu(V)=b$, and $\mu(U\oplus V)=a+b$. Hence, $\mu$ is a $\Z_n$-measure on the subspaces of $H$. Moreover, $\xi(H)=1$ and, thus, $\xi((H,\{0\}))=(1,0)$, which means that $\mu(H)=1$.

Conversely, suppose $\mu$ is a $\Z_n$-measure  on the subspaces of $H$ satisfying $\mu(H)=1$. For a tuple $\bu=(U_1,\dots,U_r)\in\minion Q_H^{(r)}$, we let $\xi(\bu)=(\mu(U_1),\dots,\mu(U_r))\in\Z_n^r$. Since the $U_i$'s are mutually orthogonal, the fact that $\mu$ is a measure on the subspaces of $H$ implies that 
    \begin{align*}\sum_{i\in[r]}\mu(U_i)=\mu\left(\bigoplus_{i\in[r]}U_i\right)=\mu(H)=1,
    \end{align*}
    so $\xi(\bu)\in\AIP_n^{(r)}$. To show that $\xi$ is a minion homomorphism, take a map $\pi:[r]\to[s]$, and observe that
    \begin{align*}
        \xi(\bu_{/\pi})
        &=
        \xi\left(\bigoplus_{i\in\pi^{-1}(1)}U_i,\dots,\bigoplus_{i\in\pi^{-1}(s)}U_i\right)
        =
        \left(\mu\left(\bigoplus_{i\in\pi^{-1}(1)}U_i\right),\dots,\mu\left(\bigoplus_{i\in\pi^{-1}(s)}U_i\right)\right)\\
        &=\left(\sum_{i\in\pi^{-1}(1)}\mu(U_i),\dots,\sum_{i\in\pi^{-1}(s)}\mu(U_i)\right)
        =
        \xi(\bu)_{/\pi},
    \end{align*}
    where the third equality uses that subspaces with distinct indices are mutually orthogonal and, thus, $\mu$ behaves additively on them. 
\end{proof}

The next result can be seen as an analogue of Gleason's theorem~\cite{gleason1975measures} for group-valued measures on subspaces of Hilbert spaces.

\begin{theorem}[{\cite[Lemma~2]{harding2005group}}]
\label{lem_hjs}
For every $n\geq2$, every $\Z_n$-measure $\mu$ on the subspaces of $\R^{2n}$ satisfies $\mu(\R^{2n})=0$.
\end{theorem}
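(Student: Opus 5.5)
The plan is to reduce the statement to a rigidity property of $\mu$ on $2$-dimensional subspaces, and then count. Fix a $\Z_n$-measure $\mu$ on the subspaces of $\R^{2n}$. If $P_1,\dots,P_n$ are mutually orthogonal planes with $\bigoplus_i P_i=\R^{2n}$, additivity gives $\mu(\R^{2n})=\sum_{i\in[n]}\mu(P_i)$. It therefore suffices to show that $\mu$ takes one common value $c\in\Z_n$ on every $2$-dimensional subspace. Then $\mu(\R^{2n})=nc=0$ in $\Z_n$. (If instead one proved that $\mu$ is constant, equal to $x$, on lines, the same count would give $\mu(\R^{2n})=2nx=0$.) This is the only place where the specific dimension $2n$ enters: it is a multiple of the group exponent. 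Note that $n\geq 2$ guarantees $\dim\geq4$, which leaves room to manoeuvre orthogonal complements.

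The first step records the elementary consequences of additivity. We have $\mu(\{0\})=0$. For any subspace $W$ and any orthonormal basis $e_1,\dots,e_k$ of $W$, $\mu(W)=\sum_i\mu(\R e_i)$. Hence $\mu$ is a \emph{frame function} on lines whose sum over orthonormal frames of $W$ does not depend on the frame. In particular, inside any plane $P$, $\mu(\ell)+\mu(\ell^\perp\cap P)=\mu(P)$ for every line $\ell\subseteq P$. Inside any $3$-space $T$, $\mu(P)=\mu(T)-\mu(P^\perp\cap T)$, so measures of planes in $T$ are controlled by measures of lines in $T$.

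The second step shows that two planes sharing a line have equal measure. Write $P=\ell\oplus a$ and $Q=\ell\oplus b$ with $a,b\perp\ell$. Then $\mu(P)-\mu(Q)=\mu(a)-\mu(b)$, where $a,b$ lie in the plane $\ell^\perp\cap(P+Q)$ or coincide. Using $\dim\geq4$, I would pick a line $m$ orthogonal to $\ell,a,b$. I would then compare the two orthogonal decompositions $(a\oplus b\text{-plane})\oplus m$ and rotated frames in the $3$-space $S=(a+b)\oplus m$. The aim is to express $\mu(a)-\mu(b)$ as a difference of two measures of the same $3$-space minus planes that are related by a further shared-line move. Since any two planes in $\R^{2n}$ are joined by a finite chain of planes in which consecutive members share a line, this step would give constancy on all planes.

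I expect the main obstacle to be exactly this second step. No continuity is available for $\Z_n$-valued measures, so Gleason-style analytic arguments are excluded, and the argument must be purely combinatorial in the lattice $L_{\R^{2n}}$. My first attempt is to work in a $4$-dimensional subspace $\R^2\otimes\R^2\cong\R^4$. There I would exploit the many orthogonal decompositions of $\R^4$ into two planes, for example those given by left and right isoclinic rotations, to show that $\mu(P)+\mu(P^\perp)$ is constant and that $P\mapsto\mu(P)$ is invariant under swapping $P$ with isoclinic partners. If this fails, the fallback is simply to invoke the cited argument of Harding, Jager, and Smith, since the statement is quoted verbatim from \cite[Lemma~2]{harding2005group}.
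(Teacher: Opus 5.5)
The paper gives no proof of this statement; it imports it verbatim from Harding, Jager, and Smith, so your fallback of citing their lemma is exactly what the paper does. Your own route, however, has a genuine gap at the second step, and that step carries all of the content.

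\textbf{The reduction leaves no slack and ignores the dimension.} Your identity $\mu(P)-\mu(Q)=\mu(a)-\mu(b)$ holds for $P=\ell\oplus a$ and $Q=\ell\oplus b$. Any two lines of $\R^m$ with $m\geq 3$ have a common perpendicular line. Hence ``$\mu$ is constant on planes'' is equivalent to ``$\mu$ is constant on lines'', that is, to every $\Z_n$-measure on $\R^{2n}$ having the form $U\mapsto x\dim U$. This is a rigidity statement, a kind of group-valued Gleason theorem, and it is much stronger than the lemma. Nothing in your argument for it uses the dimension $2n$. If it went through, it would give $\mu(\R^m)=0$ for every $\Z_n$-measure on every $\R^m$ with $m\geq 4$ and $n\mid m$; for example on $\R^n$ when $n\geq4$, which would improve the paper's theorem from $\C^{2n}$ to $\C^n$. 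That the cited lemma is stated for $\R^{2n}$ is a warning that its proof exploits that dimension specifically, rather than passing through such a rigidity result.

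\textbf{The step itself is not an argument.} Comparing decompositions of the $3$-space $S=\mathrm{span}(a,b)\oplus m$ only reproduces the frame-function identities of $\mu$ restricted to $S$. That is the same problem again, one dimension down, and no relation you write actually forces $\mu(a)=\mu(b)$.

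The isoclinic fallback does not help either. The identity $\mu(P)+\mu(P^\perp)=\mu(\R^4)$ is automatic from additivity. There is also no a priori reason for $\mu$ to be invariant under any rotation: without continuity, invariance is precisely what cannot be assumed.

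\textbf{What is missing.} Since $\Z_n$ is finite, a compactness argument in $\Z_n^{L_{\R^{2n}}}$ shows that any identity valid for all $\Z_n$-measures already follows from finitely many additivity relations. So a proof must ultimately exhibit an explicit finite configuration of orthogonal decompositions of $\R^{2n}$ whose $\Z_n$-linear combination forces $\mu(\R^{2n})=0$, in the spirit of parity proofs of the Kochen--Specker theorem. That finite configuration is the missing idea, and it is what Harding, Jager, and Smith supply; the paper remarks that the explicit quantum strategies can be read off from their construction.
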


We shall need the following observation.
\begin{lemma}
\label{lem_complexification}
For every $n\in\N$, there exists a minion homomorphism $\minion Q_{\R^n}\to\minion Q_{\C^n}$.    
\end{lemma}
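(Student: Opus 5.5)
The plan is to use complexification. For a real subspace $U\le\R^n$, let $U^{\C}=U\otimes_{\R}\C$, i.e.\ the complex span of $U$ inside $\C^n=\R^n\otimes_{\R}\C$; equivalently, $U^{\C}=U+iU$. We then define the map componentwise:
\begin{align*}
\xi(U_1,\dots,U_r)=(U_1^{\C},\dots,U_r^{\C}).
\end{align*}
The proof consists of checking three things: that $\xi$ lands in $\minion Q_{\C^n}^{(r)}$, that it preserves arities (which is immediate), and that it commutes with minors.

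\textbf{Orthogonality.} Suppose $U\perp V$ in $\R^n$, and take $u_1+iu_2\in U^{\C}$ and $v_1+iv_2\in V^{\C}$. Their standard Hermitian inner product expands into real inner products $\langle u_j,v_k\rangle$ multiplied by scalars in $\C$. Each $\langle u_j,v_k\rangle$ is zero, so $U^{\C}\perp V^{\C}$.

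\textbf{Direct sums.} We need $(U\oplus V)^{\C}=U^{\C}\oplus V^{\C}$ whenever $U\perp V$. Both sides are the complex span of $U\cup V$, and the sum on the right is direct by the orthogonality just shown. Applying this iteratively to $U_1,\dots,U_r$ gives
\begin{align*}
\bigoplus_{i\in[r]}U_i^{\C}=\Bigl(\bigoplus_{i\in[r]}U_i\Bigr)^{\C}=(\R^n)^{\C}=\C^n,
\end{align*}
so $\xi(\bu)\in\minion Q_{\C^n}^{(r)}$.

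\textbf{Minors.} The direct-sum identity also gives, for each $j\in[s]$,
\begin{align*}
\Bigl(\bigoplus_{i\in\pi^{-1}(j)}U_i\Bigr)^{\C}=\bigoplus_{i\in\pi^{-1}(j)}U_i^{\C}.
\end{align*}
For empty preimages, $\{0\}^{\C}=\{0\}$. Hence $\xi(\bu_{/\pi})=\xi(\bu)_{/\pi}$.

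No step is a real obstacle. The only point needing care is to use the right notion of complexification: $U^{\C}$ must be the complex span of $U$ (equivalently $U+iU$), not $U$ viewed as a real subset, so that $(\R^n)^{\C}=\C^n$ and orthogonality transfers via sesquilinearity.
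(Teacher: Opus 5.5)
Your proposal is correct and follows essentially the same route as the paper: complexify each subspace componentwise, check that orthogonality and the decomposition of $\R^n$ carry over to $\C^n$, and use that complexification commutes with orthogonal direct sums to get preservation of minors. You spell out the orthogonality and minor-preservation checks in more detail than the paper, which calls the latter straightforward.
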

\begin{proof}
    For any $U\in L_{\R^n}$, consider its complexification $U^{\C}=U\otimes_{\R}\C\in L_{\C^n}$. Let $\xi:\minion Q_{\R^n}\to\minion Q_{\C^n}$ be the map defined by $(U_1,\dots,U_r)\mapsto (U_1^\C,\dots,U_r^\C)$. We claim that $\xi$ is a minion homomorphism. First, for $i\neq j$, since $U_i\perp U_j$, we have that $U_i^\C\perp U_j^\C$. Moreover,
    \begin{align*}
        \bigoplus_{i\in[r]}U_i^\C=\bigoplus_{i\in[r]}(U_i\otimes_{\R}\C)=(\bigoplus_{i\in[r]}U_i)\otimes_{\R}\C=\R^n\otimes_\R\C=\C^n,
        \end{align*}
        so $\xi$ is a well-defined map from $\minion Q_{\R^n}$ to $\minion Q_{\C^n}$. The fact that $\xi$ preserves minors is straightforward.
\end{proof}

We can now prove our main result.

\begin{proof}[Proof of~\Cref{thm_ain}]
Suppose by contradiction that $\minion Q_{\C^{2n}}\to\Pol(\Lin_n)$. Composing this minion homomorphism with  those from~\Cref{prop_pol} and Lemma~$\ref{lem_complexification}$ yields $\minion Q_{\R^{2n}}\to\AIP_n$. We then deduce from~\Cref{prop_measure} that there exists a $\Z_n$-measure $\mu$ on the subspaces of $\R^{2n}$ satisfying $\mu(\R^{2n})=1$, contradicting~\Cref{lem_hjs}.
    Hence, $\minion Q_{\C^{2n}}\not\to\Pol(\Lin_n)$, and it follows from~\Cref{thm_characterisation_PT} that $\Lin_n$ is pseudo-telepathic over $\C^{2n}$.
\end{proof}

\section{Consequences}

We say that a structure $\Y$ is \textit{pseudo-telepathic} if it is pseudo-telepathic over $\C^d$ for some $d\in\N$. As an immediate consequence of~\Cref{thm_ain}, we obtain a complete classification of pseudo-telepathic structures in terms of the notion of \textit{bounded width}, which captures solvability of the (non-uniform) CSP parameterised by $\Y$ (in symbols, $\CSP(\Y)$) via
fixed levels of a local-consistency procedure~\cite{Feder98:monotone}, see also other equivalent characterisations in~\cite{AtseriasO19,tz17:sicomp,Kolaitis00:jcss,Bulatov08:dulaties,Barto16:sicomp}.

If $\Y$ has bounded width, it follows from~\cite{Barto16:sicomp} that
there exists a minion homomorphism $\minion M_{\SDP}\to\Pol(\Y)$, where $\minion M_{\SDP}$ is the minion, defined in~\cite{cz23soda:minions} and, independently, in~\cite{BrakensiekGS23}, capturing solvability via the basic semidefinite program of~\cite{Raghavendra08:everycsp}. Since $\minion Q_H\to\minion M_{\SDP}$ for each $H$ (as proved in~\cite[Theorem~24]{ciardo_quantum_minion}), in this case $\Y$ is not pseudo-telepathic. Alternatively, the latter fact can be derived from a result obtained independently in~\cite[Theorem~12]{bulatov2025satisfiability} for a formally different notion of non-commutative satisfiability extending the one in~\cite{AtseriasKS19} to the non-Boolean setting. It was observed therein that 
the straightforward translation between this notion and the framework of PVMs defining quantum strategies does not work for non-Boolean domains in arbitrary dimension. For finite-dimensional Hilbert spaces, a recent unpublished argument due to Karamlou~\cite{karamlou_2026_personal} proves that the two notions are equivalent.

Conversely, if $\Y$ has unbounded width, it was proved in~\cite{barto2009constraint} (see also~\cite{BOP18}) that $\Pol(\Y)\to\Pol(\Lin_n)$ for some $n\geq 2$. Hence, by combining~\Cref{thm_ain} and~\Cref{thm_characterisation_PT}, it follows that $\Y$ is pseudo-telepathic. 

The discussion above proves the following.

\begin{corollary}
\label{cor_unbounded_width}
    A structure $\Y$ is pseudo-telepathic if, and only if, it has unbounded width.
\end{corollary}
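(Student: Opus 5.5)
The plan is to prove the two directions separately, following the discussion preceding the statement. Throughout, \Cref{thm_characterisation_PT} translates pseudo-telepathy into a minion-theoretic statement: $\Y$ is pseudo-telepathic if and only if $\minion Q_{\C^d}\not\to\Pol(\Y)$ for some $d\in\N$.

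\emph{Bounded width implies not pseudo-telepathic.} Assume $\Y$ has bounded width. By the result of~\cite{Barto16:sicomp}, there is a minion homomorphism $\minion M_{\SDP}\to\Pol(\Y)$. By~\cite[Theorem~24]{ciardo_quantum_minion}, $\minion Q_H\to\minion M_{\SDP}$ holds for every finite-dimensional $H$. Minion homomorphisms compose, being natural transformations, so $\minion Q_{\C^d}\to\Pol(\Y)$ for every $d$. Hence, by \Cref{thm_characterisation_PT}, $\Y$ is not pseudo-telepathic over any $\C^d$.

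\emph{Unbounded width implies pseudo-telepathic.} Assume $\Y$ has unbounded width. The characterisation of~\cite{barto2009constraint} (in its minion form, cf.~\cite{BOP18}) gives $\Pol(\Y)\to\Pol(\Lin_n)$ for some $n\geq2$. I expect this to be the main obstacle. The original statement says that $\Y$ pp-constructs, or ``can simulate'', linear equations over some abelian group, and the constructed structure is typically linear equations over a prime field $\Z_p$ with arbitrary relations. To land on $\Lin_n$, I would argue as follows:
\begin{itemize}
\item The theorem yields a module, or $\Z_p$, in a pp-construction of $\Y$. By~\cite{BOP18}, pp-constructibility is equivalent to a minion homomorphism between the polymorphism minions.
\item $\Lin_p$ is itself pp-definable from full linear equations over $\Z_p$, so pp-constructibility is transitive here.
\item Hence $\Pol(\Y)\to\Pol(\Lin_p)$, taking $n=p$ prime.
\end{itemize}

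Now suppose for contradiction that $\minion Q_{\C^{2n}}\to\Pol(\Y)$. Composing with $\Pol(\Y)\to\Pol(\Lin_n)$ would give $\minion Q_{\C^{2n}}\to\Pol(\Lin_n)$. But the proof of \Cref{thm_ain} showed, via \Cref{prop_pol}, \Cref{lem_complexification}, \Cref{prop_measure} and \Cref{lem_hjs}, that no such homomorphism exists. Therefore $\minion Q_{\C^{2n}}\not\to\Pol(\Y)$, and \Cref{thm_characterisation_PT} shows that $\Y$ is pseudo-telepathic over $\C^{2n}$.

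Combining the two directions gives the equivalence.
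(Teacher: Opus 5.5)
Your proposal is correct and follows the paper's argument. For bounded width you compose $\minion Q_H\to\minion M_{\SDP}$ with $\minion M_{\SDP}\to\Pol(\Y)$. For unbounded width you compose $\Pol(\Y)\to\Pol(\Lin_n)$, obtained from the Barto--Kozik characterisation and the pp-construction/minion-homomorphism correspondence, with the non-existence of $\minion Q_{\C^{2n}}\to\Pol(\Lin_n)$ established in the proof of the main theorem. Your extra unpacking of that correspondence (pp-constructing $\Z_p$-linear equations and pp-defining $\Lin_p$ from them) is simply a more detailed version of the step the paper cites directly.
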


In particular,~\Cref{cor_unbounded_width} improves on~\cite[Theorem~1]{bulatov2025satisfiability}, which proves the analogous classification in the more permissive commuting-operator model, with the result of~\cite{slofstra2024operator} taking the role of~\Cref{thm_ain}. 
We also note that the explicit description of the perfect quantum strategies witnessing pseudo-telepathy of the $\Lin_n$ games whose existence is asserted in~\Cref{thm_ain} can be derived from the construction in~\cite{harding2005group}. It follows from~\cite{qassim2020classical} that these strategies cannot involve only tensor products of generalised Pauli observables for odd $n$.  
Pseudo-telepathy of $\Y$ is a necessary requirement for the undecidability of the quantum CSP parameterised by $\Y$; i.e., the decision problem of determining whether a perfect quantum strategy for the $\X,\Y$ game exists, given $\X$ as input. It is not a priori sufficient: turning classical soundness into quantum soundness greatly affects the nature of undecidability-preserving reductions; see, e.g.,~\cite{Ji,culf2024re,paddock2025satisfiability,CJM2025quantum,culf2026quantum,ji2021mip}. In particular, the complexity of quantum linear equations modulo $n$, known to be undecidable for $n=2$~\cite{slofstra2019set}, remains open in the general case. 

\section*{AI disclosure}
I used ChatGPT to search the literature for $\Z_n$-analogues of Gleason's theorem. This is how I found~\cite{harding2005group}. 

\addcontentsline{toc}{section}{References}
\printbibliography
\end{document}